\providecommand{\tabularnewline}{\\}
 \theoremstyle{definition}
 \newtheorem*{defn*}{Definition}
\theoremstyle{plain}
\newtheorem{thm}{Theorem}
  \theoremstyle{plain}
  \newtheorem{prop}[thm]{Proposition}
\begin{document}

\title{Simplifying products of fractional powers of powers}

\author{David R. Stoutemyer%
\thanks{dstout at hawaii dot edu%
}}
\maketitle
\begin{abstract}
Most computer algebra systems incorrectly simplify\[
\dfrac{z-z}{\dfrac{\sqrt{w^{2}}}{w^{3}}-\dfrac{1}{w\sqrt{w^{2}}}}\]
to 0 rather than to 0/0. The reasons for this are:

1. The default simplification doesn't succeed in simplifying the denominator
to 0.

2. There is a rule that 0 is the result of 0 divided by anything that
doesn't simplify to either 0 or 0/0.

Many of these systems have more powerful optional transformation and
general purpose simplification functions. However that is unlikely
to help this example even if one of those functions can simplify the
denominator to 0, because the input to those functions is the result
of \textsl{default} simplification, which has already incorrectly
simplified the overall ratio to 0. Try it on your computer algebra
systems!

This article describes how to simplify products of the form $w^{\alpha}\left(w^{\beta_{1}}\right)^{\gamma_{1}}\cdots\left(w^{\beta_{n}}\right)^{\gamma_{n}}$
correctly and well, where $w$ is any real or complex expression and
the exponents are rational numbers.

It might seem that correct good simplification of such a restrictive
expression class must already be published and/or built into at least
one widely used computer-algebra system, but apparently this issue
has been overlooked. Default and relevant optional simplification
was tested with 86 examples on 5 systems with $n=1$. Using a spectrum
from the most serious flaw being a result that is not equivalent to
the input somewhere to the least serious being not rationalizing a
denominator when that doesn't cause a more serious flaw, the overall
percentage of most flaw types is alarming:

\ 

\noindent \medskip{}
\begin{tabular}{|c||c|c|c|c|c|c|c|c|}
\hline 
\negthinspace{}flaw:\negthinspace{} & $\not\equiv$ & \negthinspace{}0-recognition\negthinspace{} & \negthinspace{}${\mathrm{cancelable}\atop \mathrm{singularity}}$\negthinspace{} & \negthinspace{}${\mathrm{extra}\atop \mathrm{factor}}$\negthinspace{} & \negthinspace{}excessive $\left|\gamma_{k}\right|$\negthinspace{} & \negthinspace{}$\neg$\,canonical\negthinspace{} & \negthinspace{}$\neg$\,idempotent\negthinspace{} & $\frac{\cdots}{\sqrt{\cdots}}$\tabularnewline
\hline
\hline 
\%: & 11 & 50 & 25 & 16 & 32 & 39 & 0.4 & 6\tabularnewline
\hline
\end{tabular}
\end{abstract}

\section{Introduction\label{sec:Introduction}}

\begin{flushright}
\textsl{{}``When you are right you cannot be too radical};''\\
-- Martin Luther King Jr.
\par\end{flushright}

First, a few crucial definitions:
\begin{defn*}
\textbf{\,Default simplification} is what a computer-algebra system
does to a standard mathematical expression when the user presses \fbox{E\textsc{nter}}
or \fbox{S\textsc{hift}} \fbox{E\textsc{nter}}, using factory-default
mode settings without enclosing the expression in an optional transformational
function such as $\mbox{expand}(\ldots)$, $\mbox{factor}(\ldots)$,
or $\mbox{simplify}(\ldots)$.
\end{defn*}
Default simplification is the minimal set of transformations that
a system does routinely. Default simplification is called \textsl{evaluation}
in \textsl{Mathematica}\textsuperscript{®} and in some other systems.
Any fixed set of default transformations is likely to omit ones that
are wanted in some situations and to include ones that are unwanted
in other situations. Therefore:
\begin{itemize}
\item Most systems also provide optional transformations done by a function
such as $\mathrm{expand}\left(\ldots\right)$ or by assigning a certain
value to a control variable such as $\mathrm{trigExpand}\leftarrow\mathrm{true}$.
\item Some systems provide a way to disable default transformations. For
example the Maxima assignment $\mathtt{simp:false}$ suppresses most
simplification, whereas the Maxima $\mathrm{box}(\ldots)$, \textsl{Mathematica}
$\mathrm{Hold}[\ldots]$ and Maple $\mathrm{freeze}(\ldots)$ functions
suppress most or all transformations on their argument.\end{itemize}
\begin{defn*}
Simplification is \textbf{idempotent} for a class of input expressions
if simplification of the result (by the same default or optional transformations)
yields the same result.
\end{defn*}

\begin{defn*}
A \textbf{conveniently cancelable singularity} is a removable singularity
that can be removed exactly by functional identities such as $\sin(2w)\equiv2\sin(w)\cos(w)$
together with transformations such as a common denominator followed
by factoring out the gcd of any resulting numerator and denominator,
then using the law of exponents $w^{\mu}w^{\nu}\rightarrow w^{\mu+\nu}$.
\end{defn*}
For example, $z^{3}z^{-2}\rightarrow z$, $\sin(2z)/\sin(z)\rightarrow2\cos(z)$,
and \[
\frac{1}{c\,\left(cx-1\right)}+\frac{1}{c}\rightarrow\frac{x}{cx-1},\]
which cancels the removable singularity at $c=0$, leaving the non-removable
singularity along the hyperbola $cx=1$. However the removable singularity
in $\sin(z)/z$ is not \textsl{conveniently} cancelable because it
can't be canceled exactly except inconveniently by means such as introducing
the piecewise function\[
\dfrac{\sin(z)}{z}\rightarrow\begin{cases}
1, & \mathrm{if}\; z=0,\\
\dfrac{\sin(z)}{z}, & \mathrm{otherwise,}\end{cases}\]
or the infinite series\[
\dfrac{\sin(z)}{z}\rightarrow\sum_{k=0}^{\infty}\dfrac{(-1)^{k}z^{2k}}{(2k+1)!}.\]

\begin{defn*}
A \textbf{nested power produc}t is an expression or a sub-expression
of the form\begin{equation}
w^{\alpha}\left(w^{\beta_{1}}\right)^{\gamma_{1}}\cdots\left(w^{\beta_{n}}\right)^{\gamma_{n}},\label{eq:DefinitionOfNestedPowerProduct}\end{equation}
 with $n\geq1$, rational exponents, and $\alpha$ possibly 0 or 1.
\end{defn*}
This article describes simple algorithms that can be used in default
and/or optional transformations to simplify nested power products
correctly and well. The abstract presents one example of why this
is important.

Default and relevant optional transformations for \textsl{Derive}\textsuperscript{®}
6.00, TI-CAS version 3.10%
\footnote{The computer algebra embedded in a succession of TI handheld calculators,
Windows and Macintosh computers has no name independent of the product
names, the most recent of which is TI-Nspire\textsuperscript{tm}.%
}, Maxima 5.24.0, Maple\textsuperscript{tm} 15.00 and \textsl{Mathematica}
8.0.4.0 were tested on 86 examples for the simplest case where $n=1$.
The table in the Abstract shows the overall percentages for each of
eight different decreasingly serious flaw types described in Section
\ref{sec:ListOfGoals}.

Those large percentages for the six most serious kinds of flaws are
alarming, and so are many corresponding percentages for each of the
five systems.%
\footnote{I am guilty as a coauthor of \textsl{Derive} and TI-computer algebra.%
} Wikipedia currently lists 29 other computer algebra systems, and
I strongly suspect that most or all of them also have substantial
room for improvement in this regard.

Here is an outline of the rest of the article: Section \ref{sec:More-important-definitions}
defines three more crucial terms. Section \ref{sec:ListOfGoals} describes
eight prioritized goals for results that are nested power products,
why they are important, and the reasons for their priorities. Section
\ref{sec:Experimental-results} describes the tables of results at
the end of this article and how the listed result flaws were measured.
Section \ref{sec:Four-alternative-forms} describes four good forms
for nested power products and how to obtain them:
\begin{enumerate}
\item Form 1 merely standardizes the outer fractional exponents to the interval
$(-1,1)$ in a way that doesn't introduce removable singularities,
but instead tends to reduce their magnitude -- perhaps completely.
\item Form 2 further reduces many outer fractional exponents to $[-1/2,1/2]$
in a way that cancels as much of any removable singularity as can
be done without resorting to form 4. Form 2 is an improvement on form
1 at the expense of more computation.
\item Form 3 absorbs $w^{\alpha}$ into one of the nested powers just prior
to display if $w^{\alpha}$ can thus be totally absorbed, giving a
result with one less factor. Form 3 is an aesthetic improvement on
form 2 at the expense of more computation.
\item Form 4 completely cancels any cancelable singularity and nicely collapses
all of the exponents into a single unnested exponent. However, this
form often entails a complicated unit magnitude piecewise constant
factor that is -1 raised to a complicated exponent. Unsophisticated
users might be baffled by this factor, and even sophisticated users
might abhor the mess. However, this form must be addressed because
it can occur in input, it is valuable for some purposes, and some
computer algebra systems generate this form for some inputs. 
\end{enumerate}
Section \ref{sec:Unimplemented-extensions} suggests how to extend
the algorithms to recognize syntactically different but equivalent
instances of $w$ in nested power products and how to extend the algorithms
to some kinds of non-numeric exponents. Section \ref{sec:Summary}
is an overall summary. The Appendix lists about one page of \textsl{Mathematica}
rewrite rules that implement most of the third result form. Tables
of results and their flaw numbers for the five systems and for the
rewrite rules are at the end of the article.

\section{\label{sec:More-important-definitions}More key definitions}

In this article:
\begin{itemize}
\item Unless stated otherwise, an \textbf{indeterminate} is a variable that
has no assigned value, rather than a result such as 0/0. 
\item Any finite or infinite-magnitude complex value can be substituted
for indeterminates in expressions.
\item Fractional powers and square roots denote the principal branch.%
\footnote{By default some systems assume that indeterminates represent \textsl{real}
values and/or use the \textsl{real} branch wherein for reduced integers
$m$ and $n$, $(-1)^{m/n}\rightarrow1$ for $m$ even, and $(-1)^{m/n}\rightarrow-1$
for $m$ and $n$ odd. However, most computer algebra systems provide
a way to force the principal branch if it isn't the default -- and
to declare that an indeterminate is complex if that isn't the default.%
}\end{itemize}
\begin{defn*}
A \textbf{canonical form} for a class of expressions is one for which
all equivalent expressions in the class are represented uniquely.
\end{defn*}
Canonical forms help make cancellations of equivalent sub-expressions
automatic. For example, if a computer algebra system always makes
arguments of functional forms such as $\sin(\ldots)$ canonical, then
an input sub-expression such as $\sin\left((x+1)^{2}\right)-\sin\left(x^{2}+2x+1\right)$
automatically simplifies to 0 rather than remaining unchanged as a
bulky land mine that might make a subsequent result incorrect. Without
canonical arguments, recognition of cryptically similar factors and
terms requires costly tests such as determining if the difference
in corresponding arguments can be simplified to 0. This might happen
every time the same two functional forms meet during processes such
as expansion of an integer power of a sum containing two sines, which
can be often. In contrast, canonical arguments permit a much faster
mere syntactic comparison of functional forms.

As discussed in \cite{Brown,MosesSimplification,Stoutemyer10Commandments},
canonical forms are unnecessarily costly and rigid for the entire
class of expressions addressed by general-purpose computer algebra
systems. However, canonical forms are acceptable and good for default
simplification of some simple classes of \textsl{irrational} \textsl{sub-expressions}
such as nested power products.
\begin{defn*}
\textbf{Zero-recognizing simplification} for a class of expressions
is simplification for which all expressions in the class equivalent
to 0 are transformed to 0.
\end{defn*}
As illustrated by the example in the Abstract, a failure to recognize
that a sub-expression is equivalent to 0 can lead to dramatically
incorrect results. Therefore it is desirable for default simplification
to have at least a zero-recognition property. It has been proven impossible
to guarantee this even for some rather simple classes of irrational
expressions, but a strong effort should be made to achieve at least
zero recognition for as broad a class of expressions as is practical.
\begin{defn*}
\textbf{Candid simplification} produces results that are not equivalent
to an expression that visibly manifests a simpler expression class.
\end{defn*}
For example, in a candid result there are no superfluous variables,
degree magnitudes are not larger than necessary, there are no unnecessary
irrational sub-expressions, and irrationalities are nested no more
deeply than necessary. Thus without being as rigidly constrained as
canonical forms, candid simplification yields more desirable properties
than mere zero-recognizing simplification.
\begin{defn*}
In this article \textbf{undefined} means an unknown point in the entire
infinite complex plane, such as the result of 0/0.%
\footnote{It is of course audacious to define undefined. Although unnecessary
for this article, systems could usefully also\vspace{-8pt}

\begin{itemize}
\item display 0/0 as 0/0 rather than a vague controversial word such as
{}``undefined'', and\vspace{-5pt}

\item contract functions of 0/0 to strict subsets of the complex plane wherever
possible, such as $\arg(0/0)\rightarrow(-\pi,\pi]$. Having $\arg(0/0)\rightarrow0/0$
snatches defeat from the jaws of compromise. Try this on your systems!
Many systems throw an error, which is worse because it requires even
amateur authors of functions to know about all the potential throws,
catch them or vet to prevent them, and respond appropriately to make
their functions robust.
\end{itemize}
}
\end{defn*}
\vspace{1pt}

\begin{defn*}
A \textbf{conveniently representable }subset of the infinite complex
plane is one that is reasonably representable using constant expressions
extended by sets, intervals and the symbol $\infty$.
\end{defn*}
Conveniently representable proper subsets of the infinite complex
plane are regarded here as defined. Particular computer algebra systems
might not be able to represent the full range of possibilities, but
this article is suggesting what \textsl{should} be done as well as
reporting the current situation. These ideas are discussed in more
detail in \cite{StoutemyerUsefulNumbers}, but for this article the
major defined subset of interest that isn't a single point is the
result of $u/0$ for any particular non-zero complex constant $u$.
This result should be some representation of complex infinity. Among
many other benefits it permits the correct computation\begin{eqnarray*}
\dfrac{1}{1+\dfrac{1}{0}} & \rightarrow & 0.\end{eqnarray*}
Does your computer algebra system do this?
\begin{itemize}
\item For \textsl{Mathematica}, $1/0\rightarrow\mathtt{ComplexInfinity}$.
\item For \textsl{Derive}, with its default real domain, $1/0\rightarrow\pm\infty$.
\item For TI-CAS, regrettably $1/0\rightarrow\mathrm{undef}$.
\item Maxima and Maple inconveniently throw an error.
\end{itemize}
When a proper subset of the infinite complex plane isn't conveniently
representable, then the next best thing is to degrade it to 0/0. However,
that shouldn't be done for subsets that are as easily represented
as complex infinity.

If finite or infinite magnitude complex numbers are substituted for
all of the indeterminates in an \textsl{unsimplified input} expression,
then that \textsl{input} expression is undefined at that point if
and only if the result is 0/0.
\begin{defn*}
A \textbf{generalized limit} is the set of uni-directional limits
of an input expression from all possible directions in the complex
plane.
\end{defn*}
When the generalized limit of an input expression at a conveniently
cancelable singularity is a conveniently representable proper subset
of the entire infinite complex plane, then this article regards it
as not only acceptable but \textsl{commendable} to cancel the singularity
and thereby produce a \textsl{result} expression whose substitutional
value is that conveniently representable subset at that point.

Reasons for this attitude about mathematics software include:
\begin{itemize}
\item Otherwise the results tend to be unacceptably complicated.%
\footnote{Canceling a gcd occasionally increases bulk significantly, such as
$(x^{99}-1)/(x-1)\rightarrow x^{98}+x^{97}+\cdots+x+1,$ but the algorithms
described here consider only \textsl{syntactic} cancellation, which
always decreases bulk.%
}
\item There is a high likelihood that the physical problem is actually continuous
there too -- Nature abhors a removable singularity. Removable singularities
are often an artifact of the modeling such as using a polar or spherical
coordinate system.
\item Cancelable singularities are often a result of an \textsl{unnecessary}
previous transformation unavoidably done by a system (such as inappropriate
rationalization of a denominator) or a result of a previous transformation
such as monic normalization, a tangent half angle substitution, or
expansion into partial fractions deemed necessary to obtain an anti-derivative.
\item Cancellation to simplify nested power products is consistent with
quiet transformations such as $w/w\rightarrow1$ that are currently
\textsl{unavoidable} in most computer algebra systems;
\item Symbolic cancelation tends to reduce rounding errors near removable
singularities for subsequent substitution of floating-point numbers.
\end{itemize}
However, this transformation of expressions has the composability
consequence that substitution of numeric values doesn't necessarily
commute with simplification. To accommodate either treatment of expressions,
computer algebra systems could and should build in \textsl{provisos}
such as {}``$|\: w\neq0$'' that are optionally attached automatically
to intermediate and final results containing canceled removable singularities,
as suggested in \cite{CorlessAndJeffreyProvisos,Stoutemyer10Commandments}.
Meanwhile, implementers who do not want to completely cancel cancelable
singularities for simplifying nested power products can adapt the
algorithms presented here to merely reduce the magnitude of cancelable
singularities, such as $\left(z^{2}\right)^{5/2}/z^{3}\rightarrow\left(z^{2}\right)^{3/2}/z$
rather than transforming all the way to $z\sqrt{z^{2}}$.

\section{\label{sec:ListOfGoals}A list of goals for simplifying nested power
products}

The most important concern is correctness, followed by candidness,
then aesthetics and compliance with custom. More specifically, here
is a list of desirable but partially conflicting goals for simplifying
nested power product and their differences, in decreasing order of
importance:
\begin{enumerate}
\item The result should be equivalent to the input \textsl{wherever the
input is defined}. (It is acceptable for the result to be a generalized
limit of the input where the input is 0/0.)
\item A linear combination of two or more equivalent nested power products
should simplify to a multiple of a single power product -- or to 0
if the linear combination is equivalent to 0.
\item Let the \textbf{net exponent} of $\left(w^{\beta_{k}}\right)^{\gamma_{k}}$
be\[
\triangle_{k}:=\beta_{k}\gamma_{k},\]
and for a product of nested powers of $w$ let the \textbf{total positive
nested exponent} and the \textbf{total negative nested exponen}t be\begin{eqnarray}
\triangle_{+} & := & \sum_{k=1}^{n}\max\left(\triangle_{k},0\right),\\
\triangle_{-} & := & \sum_{k=1}^{n}\min\left(\triangle_{k},0\right).\end{eqnarray}
When possible, use the transformation\begin{equation}
\left(w^{\beta_{k}}\right)^{\gamma_{k}}\rightarrow w^{m_{k}\beta_{k}}\left(w^{\beta_{k}}\right)^{\gamma_{k}-m_{k}}\label{eq:ShiftingIntegersInOneNestedPower}\end{equation}
with appropriate integers $m_{k}$ to minimize $\min\left(\max\left(\alpha,0\right)+\triangle_{+},\:-\min\left(\alpha,0\right)-\triangle_{-}\right)$,
thus canceling as much of any removable singularity as is possible
by this means.
\item When possible, \textsl{fully} absorb the $w^{\alpha}$ into the nested
powers of $\left(w^{\beta_{k}}\right)^{\gamma_{k}}$ to have fewer
factors.
\item Otherwise use transformation (\ref{eq:ShiftingIntegersInOneNestedPower})
to minimize $\triangle_{+}-\triangle_{-}$ to minimize the contributions
of the troublesome nested powers.
\item Inputs that are equivalent where both are defined should produce the
same (canonical) result.
\item Results should be idempotent: Reapplying the same default or optional
simplification to the result should leave it unchanged.
\item To help achieve goal 6, rationalize a denominator in a nested power
product when this doesn't introduce a removable singularity or increase
its magnitude.
\end{enumerate}
A larger numbered goal should not be fulfilled if the only way to
fulfill it is to violate a smaller numbered goal. For example, fulfillment
of goals 5 or 8 can often violate goals 1, 2 and/or 3. 

The reasons for this ranking of the goals are:
\begin{enumerate}
\item A violation of goal 1 is most unsatisfactory because it is a result
that is not equivalent to the input everywhere the input is defined.
For example, if expression $w$ can be 0, then rationalizing the denominator
of $1/\sqrt{w}$ to give $\sqrt{w}/w$ makes an input that is a well-defined
complex infinity at $w=0$ become 0/0 there. A more serious example
is the mal-transformation $(w^{-2})^{-1/2}\rightarrow(w^{2})^{1/2}$,
because the two sides differ along the entire positive and negative
imaginary axis. For example $(i^{-2})^{-1/2}=-i$, whereas $(i^{2})^{1/2}=i$. 
\item The example in the Abstract shows the importance of zero recognition.
For example if default simplification of one nested power product
produces $\sqrt{z}/z$ and default simplification of another nested
power product produces the equivalent expression $z/\sqrt{z}$, then
the latter violates goal 8 and together they violate goal 6. These
violations are minor; but if default simplification doesn't simplify
their difference to 0, then that is a violation of goal 2, which is
serious.
\item A violation of goal 3 is next most serious because it is a squandered
opportunity to improve the result by canceling a conveniently cancelable
singularity and thereby making the result have the limiting value
at $w=0$ rather than be undefined there. For example,\begin{eqnarray}
\dfrac{\left(w^{2}\right)^{5/2}}{w}\;\rightarrow & \dfrac{w^{2}\left(w^{2}\right)^{3/2}}{w} & \rightarrow\; w\left(w^{2}\right)^{3/2},\label{eq:FirstEgOfGoal3}\\
w\left(\dfrac{1}{w^{2}}\right)^{5/2}\rightarrow & \dfrac{w}{w^{2}}\left(\dfrac{1}{w^{2}}\right)^{3/2} & \rightarrow\;\dfrac{1}{w}\left(\dfrac{1}{w^{2}}\right)^{3/2}.\label{eq:SecondEgOfGoal3}\end{eqnarray}

\item A violation of goal 4 is more complicated than need be. For example,
most people would agree that $w^{4}\left(w^{2}\right)^{1/2}$ is more
complicated than $\left(w^{2}\right)^{5/2}$, which has one less factor.
\item Goal 5 is important because when there is more than one factor of
the form $\left(w^{\beta}\right)^{\gamma_{k}}$, there might be more
than one way to distribute only some of $w^{\alpha}$ into the nested
powers. In contrast if $\gamma_{k}$ is a half-integer power then
there are only two ways to minimize $\left|\gamma_{k}\right|$ by
factoring an integer power of $w^{\alpha}$ out of $(w^{\alpha})^{\gamma}$,
or only one way for other fractional powers. Moreover, unnested exponents
are less specialized and can therefore interact more freely with other
factors in a product. For example, for intermediate results (\ref{eq:FirstEgOfGoal3})
and (\ref{eq:SecondEgOfGoal3}),\begin{eqnarray*}
w\left(w^{2}\right)^{3/2} & \rightarrow & w^{2}\left(w^{2}\right)^{1/2},\\
\dfrac{1}{w}\left(\dfrac{1}{w^{2}}\right)^{3/2} & \rightarrow & \dfrac{1}{w^{2}}\left(\dfrac{1}{w^{2}}\right)^{1/2}.\end{eqnarray*}

\item For a given $w$, the above goals tend to yield the most concise possible
nested power product in terms of $w$. Therefore it is better to have
the consistency of having two inputs that are equivalent where they
are defined return the same most concise form. More importantly a
canonical form for nested power product sub-expressions greatly facilitates
achieving goal 2 -- a major benefit for very little effort.
\item Without idempotency, an unaware user could obtain inconsistent results,
and a cautious aware user would have to re-enter such results as inputs
until they cycle or stop changing.
\item Goal 8 complies with the custom of rationalizing denominators and
helps achieve canonicality goal 6. For example,\begin{eqnarray*}
\dfrac{w}{\sqrt{w^{2}}}\;\rightarrow & \dfrac{w\sqrt{w^{2}}}{w^{2}} & \rightarrow\;\dfrac{\sqrt{w^{2}}}{w}.\end{eqnarray*}
But rationalization should not be done at the expense of lower-numbered
goals. For example,\begin{eqnarray*}
\dfrac{1}{w\left(w^{2}\right)^{2/3}} & \not\rightarrow & \dfrac{\left(w^{2}\right)^{1/3}}{w^{3}},\end{eqnarray*}
because although it reduces the absolute value of the outer exponent
(goal 5), it violates goal 1 by making an input that is complex infinity
at $w=0$ become a result that is 0/0 there.
\end{enumerate}

\section{\label{sec:Experimental-results}Important information about the
tables}

Tables at the end of this article show the results that occurred for
each example with each system and with the Appendix rewrite rules.
In all of the tables the goal numbers in the Section \ref{sec:ListOfGoals}
goals that aren't satisfied but could be satisfied without violating
a lower-numbered goal are listed beside each result. Unmet goal numbers
1 and 2 are boldface to emphasize their extreme seriousness.

\subsection{Examples, test protocol and table interpretation}

Tables \ref{Flo:MathematicaDefaultTable} through \ref{Flo:MaximaFullratsimpTableAndMapleSimplifyTable}
report default and relevant optional transformation results for \textbf{test
family 1}: multiplying $w^{m}$ by $\left(w^{2}\right)^{n+2/3}$ for
successive integer $m=-3$ through 3 in combination with successive
integer $n=-3$ through 2. For comparison with results that meet all
of the goals, Table \ref{Flo:AlgorithmAtable} has corresponding results
for form 3 described in Section \ref{sec:Four-alternative-forms},
as produced by the one page of \textsl{Mathematica} rewrite rules
listed in the Appendix.

Tables \ref{Flo:MathematicaDefaultSqrtReciprocalTable} through \ref{Flo:MapleSimplifySqrtReciprocal}
report default and relevant optional transformation results for \textbf{test
family 2}: multiplying $w^{m}$ by $\left(w^{\boldsymbol{-}2}\right)^{n+1/2}$
for the same combinations of $m$ and $n$. For comparison with results
that meet all of the goals, Table \ref{Flo:AlgorithmASqrtReciprocalTable}
has corresponding form 3 results produced by the Appendix rewrite
rules.

To help assess compliance with goal 8, Table \ref{Flo:WonSqrtWSquaredTable}
compares results for all of the systems with the Appendix rewrite
rules on the particularly simple input $w/\sqrt{w^{2}}$ . Some of
the examples in test families 1 and 2 also test this goal.

Table \ref{Flo:Form1MinusForm4Table} tests only whether or not the
expression\[
\dfrac{\sqrt{w^{2}}}{w}-\left(-1\right)^{^{\frac{1}{2}\left(\arg(w^{2})-2\arg(w)\right)/\pi}}\]
simplifies to 0. This is the difference between equivalent expressions
in form 3 and form 4. This is a more difficult but not impossible
problem. All systems fail -- including the Appendix rewrite rules,
which do not address this issue.

Here is how compliance with the goals was assessed:
\begin{enumerate}
\item Most of the results that violated goal 1 did so only at $w=0$. However,
the goal 1 violations in Tables \ref{Flo:MaximaDefaultSqrtReciprocal}
and \ref{Flo:MaximaFullratsimpSqrtRecip} instead or also are not
equivalent to the input where it is defined along the entire positive
and negative imaginary axis. This is caused by an outlaw of exponents:
transforming $\left(w^{-\lambda}\right)^{\mu}$ to $\left(w^{\lambda}\right)^{-\mu}$
for fractional $\mu$, which is not valid along these semi-axes.
\item For test family 1, each input is equivalent to the input two rows
down and one column left wherever both are defined, and their omnidirectional
limits are identical wherever one of the inputs is 0/0. Therefore
to assess compliance with goal 2 (zero-recognition), for every entry
in the table I computed the difference $w^{m}\left(w^{2}\right)^{n+2/3}-w^{m+2}\left(w^{2}\right)^{n-1+2/3}$
or the optional transformation thereof and the difference $w^{m}\left(w^{2}\right)^{n+2/3}-w^{m-2}\left(w^{2}\right)^{n+1+2/3}$,
then considered it a flaw for the entry if either of these two differences
was non-zero. Thus compliance with this goal is \textsl{not} discernible
from merely inspecting the result entries. Compliance is a property
of the default simplification or optional transformation when given
the difference of two non-identical but equivalent nested power products.
For test family 2, each input is equivalent to the input two rows
down and one column \textsl{right }wherever both are defined, so I
did an analogous test for that. It is of course possible for an entry
to pass these limited tests but fail for more widely separated equivalent
inputs.%
\footnote{This happens for Tables \ref{Flo:MaximaDefaultSqrtReciprocal} and
\ref{Flo:MaximaFullratsimpSqrtRecip}: \textsl{All} of the columns
would exhibit flaw 2 if one of the two equivalent expressions was
always taken from the correct results in columns 3 or 4. The results
are not equivalent to the inputs for columns 1, 2, 5 and 6, so the
only reason the difference simplified to 0 for columns 1 and 6 was
the subtraction of incorrect but identical results -- an instance
where two wrongs make a right.%
} For Table \ref{Flo:WonSqrtWSquaredTable} the result is equivalent
to $\sqrt{w}/w$ and $w^{2}/(w^{2})^{3/2}$, so I tested whether or
not the corresponding two differences or optional transformation thereof
simplified to 0. Table \ref{Flo:Form1MinusForm4Table} tests zero
recognition directly, and only than -- but with only \textsl{one}
particular difference in equivalent forms rather than only two.
\item To comply with goal 3 without violating lower-numbered goals, a result
$w^{\hat{\alpha}}\left(w^{\beta}\right)^{\hat{\gamma}}$ should have\begin{eqnarray*}
 & \alpha=0\:\vee\:\\
 & \left(\mathrm{sign}\left(\hat{\alpha}\right)=\mathrm{sign}\left(\beta\hat{\gamma}\right)\:\wedge\:\left|\hat{\gamma}\right|<1\right)\:\vee\:\\
 & (\mathrm{\alpha\neq0\:\wedge\: sign}\left(\hat{\alpha}\right)\neq\mathrm{sign}\left(\beta\hat{\gamma}\right)\:\wedge\:\left|\hat{\gamma}\right|<1\:\wedge\:\\
 & \min(\left|\beta\hat{\gamma}\right|,\left|\hat{\alpha}\right|)\:\leq\:\min(\left|\beta\left(\hat{\gamma}-\mathrm{sign}\left(\hat{\gamma}\right)\right)\right|,\,\left|\hat{\alpha}+\beta\,\mathrm{sign}\left(\hat{\gamma}\right)\right|)).\end{eqnarray*}

\item To comply with goal 4, $\hat{\alpha}$ should not be an integer multiple
of $\beta$.
\item To comply with goal 5 without violating lower numbered goals,\[
\hat{\alpha}=0\;\vee\;\left(\mathrm{sign}\!\left(\hat{\alpha}\right)\!=\!\mathrm{sign}\!\left(\beta\hat{\gamma}\right)\:\wedge\:\left|\hat{\gamma}\right|\!<\!1\right)\;\vee\;\left(\mathrm{\alpha\!\neq\!0\:\wedge\: sign}\!\left(\hat{\alpha}\right)\!\neq\!\mathrm{sign}\!\left(\beta\hat{\gamma}\right)\:\wedge\:\left|\hat{\gamma}\right|\!\leq\!\frac{1}{2}\right).\]

\item For compliance with goal 6, every result entry for test family 1 should
be \textsl{identical} to the entry 2 rows down and one column left,
whereas every result entry for test family 2 should be identical to
the entry 2 rows down and one column right. To equally assess the
top two rows, the bottom two rows, the leftmost column and the rightmost
column, I computed extra neighbors bordering those shown. When there
were differences, I did not penalize the best displayed results for
equivalent entries unless their was a better displayed result one
column and one or two rows outside the table. However, I did penalize
\textsl{all} of the not-best members for equivalent entries. I similarly
tested the result in Table \ref{Flo:WonSqrtWSquaredTable} against
the equivalent expressions $\sqrt{w}/w$ and $w^{2}/(w^{2})^{3/2}$.
It is of course possible for an entry to pass these tests but fail
for more widely separated equivalent inputs.
\item To test compliance with goal 7, I resimplified each result with either
default simplification or the optional transformation used for the
original input, then checked for identical results. Compliance with
this goal is \textsl{not} discernible from merely inspecting the result
entries.
\item To test compliance with goal 8, I manually rationalized the results
having a fractional power in the denominator and $\alpha\neq0$ by
multiplying the numerator and denominator by $(w^{-2})^{1/3}$ for
test family 1 or $\sqrt{w^{2}}$ for test family 2. It was counted
as a flaw if and only if that forced rationalization did not introduce
a removable singularity or increase its magnitude.
\end{enumerate}

\subsection{Remarks about particular results.}

Maxima also has a relevant $\mathrm{rat}(\ldots)$ function. For these
examples, it generally produces the same result as default simplification,
except that fractional powers are represented as an integer power
of a reciprocal power -- or an integer power of $\sqrt{\ldots}$ for
half-integer powers. Thus a default result $w^{3}\left(w^{2}\right)^{5/3}$
would instead be $w^{3}((w^{2})^{1/3})^{5}$, and a default result
$w^{3}\left(w^{2}\right)^{5/2}$ would instead be $w^{3}\sqrt{w^{2}}^{5}$.
The standard definition of $u^{m/n}$ for reduced integers $m$ and
$n$ \textsl{is} $\left(u^{1/n}\right)^{m}$, which is consistent
with the alternate definition $e^{\ln\left(u\right)m/n}$.%
\footnote{This is \textsl{not} generally equivalent to $\left(u^{m}\right)^{1/n}$:
{}``be faithful to your roots'' -- Mason Cooley.%
} Consequently, the Maxima $\mathrm{rat}\left(\ldots\right)$ function
makes the standard interpretation of the result more explicit at the
expense of clutter. Nonetheless, it might be helpful as a precursor
to semantically substituting a new expression for $\left(w^{2}\right)^{1/3}$
via syntactic substitution in a expression containing $\left(w^{2}\right)^{m/3}$
for several different integer $m$. For the sake of brevity, results
are not included for the $\mathrm{rat}(\ldots)$ function because
its flaws were very nearly identical to default simplification, regarding
$((w^{2})^{1/n})^{m}$ as $(w^{2})^{m/n}$.

\textsl{Mathematica}, Maxima and Maple also respectively have relevant
PowerExpand{[}...{]}, radcan(...) and simplify(..., symbolic) functions.
However, they always transform $(w^{\beta})^{\gamma}$ to $w^{\beta\gamma}$,
which is not equivalent along entire rays from $w=0$. I didn't test
these functions because their purpose is presumably partly to allow
these risky unconditional transactions for consenting adults. However,
these three systems, \textsl{Derive} and TI computer algebra also
have \textsl{safe} ways to enable such desired transformations, when
justified, by declaring, for example, that certain variables are real
or positive.

\begin{flushright}
{}``\textsl{\ldots{} a man who thought he could somehow pull up
the root without affecting the power.}''\\
--adapted from Gilbert K. Chesterton
\par\end{flushright}

To make sure that $w$ is regarded as a complex variable and the principal
branch is used rather than the real branch:
\begin{itemize}
\item All of the \textsl{Derive} results follow a prior declaration $w:\in\mathtt{Complex}$.
\item All of the TI-CAS results necessarily used $w\_$ rather than $w$
to manifestly declare it as a complex indeterminate. However for consistency
$w\_$ is displayed in all of the tables as $w$ because Table \ref{Flo:TIandMapleDefaultSqrtReciprocal}
is shared with Maple for brevity.
\item All of the Maxima results followed a prior assignment $\mathtt{domain:complex}$
and a prior declaration $\mathtt{declare(w,complex)}$.
\end{itemize}
As illustrated by Tables \ref{Flo:MapleSimplifySqrtReciprocal} through
\ref{Flo:Form1MinusForm4Table}, the Maple $\mathrm{simplify}(\ldots)$
function expresses half-integer powers of squares or of reciprocals
of squares using the Maple $\mathrm{csgn}(\ldots)$ function defined
by\begin{equation}
\mathrm{csgn}(w):=\begin{cases}
1, & \mathrm{if}\;\Re(w)>0\:\vee\;\Re(w)=0\;\wedge\;\Im(w)\geq0,\\
-1, & \mathrm{otherwise}.\end{cases}\label{eq:DefinitionOfCsgn}\end{equation}
The right side of this definition is a simplified special instance
of form 4, for which $\mathrm{csgn}(w)$ is a convenient abbreviation
for those familiar with it.%
\footnote{Jeffrey \cite{Jeffrey} uses the unwinding function to generalize
csgn to a $C_{n}$ that works \textsl{for all} fractional powers.
If and when implemented in Maple, that will avoid unwelcome mixtures
of $\mathrm{csgn}(\ldots)$ with other form 4 notations for results
containing both half-integer and other nested powers.%
}

Regarding Table \ref{Flo:Form1MinusForm4Table}:
\begin{itemize}
\item \textsl{Mathematica} (hence also the Appendix rewrite rules) did the
automatic transformation\[
(-1)^{\left(1/2\right)\left(\mathrm{Arg}\left[w^{2}\right])-2\mathrm{Arg}\left[w\right]\right)/\pi}\rightarrow i^{\left(\mathrm{Arg}\left[w^{2}\right])-2\mathrm{Arg}\left[w\right]\right)/\pi}.\]
Although it eliminates the $1/2$ factor from the exponent in this
case, it does so at the expense of candidness by introducing $i$
into an expression that is real for all $w$.
\item For TI-CAS the $\arg$ function is spelled {}``angle'' and regrettably
angle(0) is returned unchanged rather than transforming to 0. Therefore
the input was\[
\dfrac{\sqrt{w\_^{2}}}{w\_}-\left(-1\right)^{\mathrm{when}\left(w\_=0,\:0,\:(1/2)(\mathrm{angle}(w\_^{2})-2\mathrm{angle}(w\_))/\pi\right).}\]
As indicated in Table \ref{Flo:Form1MinusForm4Table}, the real power
of -1 in the input was changed to an imaginary power of $e$ in the
result. This has the candidness disadvantage of introducing $i$ into
an expression that is real for all real $w\_$.
\item For \textsl{Derive} the arg function is spelled {}``phase'' and
regrettably phase(0) returns $\pi/2\pm\pi/2$, which denotes an unknown
element of $\left\{ 0,\pi\right\} $, which are the only two possibilities
for \textsl{real} arguments. Therefore the input was\[
\dfrac{\sqrt{w^{2}}}{w}-\left(-1\right)^{\mathrm{IF}\left(w=0,\;0,\;(1/2)(\mathrm{PHASE}(w{}^{2})-2\mathrm{PHASE}(w))/\pi\right).}\]

\item For Maple, the $\arg$ function is spelled {}``argument'', and for
Maxima it is spelled {}``carg''.
\end{itemize}
If you are interested in results for some other systems, then try
a few of the examples that are heavily flawed for most of the five
tested systems.%
\footnote{If you are familiar enough with those systems, then most of them probably
have a quick way to generate all of the results for test families
1 and 2 by entries analogous to the following one for \textsl{Mathematica}:\[
\mathtt{Table\,}[\mathtt{Table\,}[w^{j}(w^{2})^{k},\left\{ k,\,-7/3,\,8/3\right\} ,\left\{ j,\,-3,\,3\right\} \;//\mathtt{TableForm}\]
I am interested in knowing your results.%
} First do whatever is necessary so that fractional powers use the
principal branch and $w$ is regarded as a complex indeterminate.

\section{\label{sec:Four-alternative-forms}Four alternative forms}

Section \ref{sec:Introduction} explains the reasons for four separate
forms.

\subsection{Form 1: Reduction of outer fractional exponents to (-1, 1)\label{subWidth2Interval}}
\begin{defn*}
For $x\in\mathbb{R}$ the \textbf{integer part function}\[
\mathrm{Ip}\left(x\right):=\begin{cases}
\left\lfloor x\right\rfloor , & \mathrm{if}\; x\geq0,\\
\left\lceil x\right\rceil , & \mathrm{otherwise}.\end{cases}\]

\end{defn*}
\ 

\vspace{-12pt}

\begin{defn*}
For $x\in\mathbb{R}$ the \textbf{fractional part function} $\mathrm{Fp}(x):=x-\mathrm{Ip}(x)$.\end{defn*}
\begin{prop}
For $\beta\in\mathbb{Q}$, $\gamma\in\mathbb{Q}-\mathbb{Z},$ and
arbitrary expression $w\in\mathbb{C}$, \begin{equation}
\left(w^{\beta}\right)^{\gamma}\equiv w^{\mathrm{\beta\, Ip}\left(\gamma\right)}\left(w^{\beta}\right)^{\mathrm{Fp}\left(\gamma\right)}.\label{eq:TransformationToMinus1One}\end{equation}
\end{prop}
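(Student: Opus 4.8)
The plan is to reduce the identity to the two exponent laws that remain valid for principal-branch powers of a \emph{fixed} complex base, and then to verify that the asserted decomposition uses only those two. Throughout I would take $w\neq 0$; the value $w=0$ is precisely a point where one or both sides can be $0/0$, and there the symbol $\equiv$ only asks for agreement of generalized limits, so it may be set aside. For $w\neq 0$ write $u:=w^{\beta}=e^{\beta\ln w}$, where $\ln$ is the principal logarithm; since $w\neq 0$ we have $u\neq 0$, and for any rational exponent $s$ the principal power is $u^{s}=e^{s\ln u}$, consistently with the definition $u^{m/n}=e^{(m/n)\ln u}$ noted in the earlier footnote.

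First I would record the decomposition $\gamma=\mathrm{Ip}(\gamma)+\mathrm{Fp}(\gamma)$. Because $\mathrm{Ip}$ truncates toward $0$, the integer $\mathrm{Ip}(\gamma)$ and the fraction $\mathrm{Fp}(\gamma)$ share the sign of $\gamma$ and $|\mathrm{Fp}(\gamma)|<1$; since $\gamma\notin\mathbb{Z}$ we in fact have $\mathrm{Fp}(\gamma)\in(-1,1)\setminus\{0\}$, the interval promised by Form~1. I would then apply the same-base addition law $u^{a+b}=u^{a}u^{b}$ with $a=\mathrm{Ip}(\gamma)$ and $b=\mathrm{Fp}(\gamma)$ to obtain $(w^{\beta})^{\gamma}=u^{\mathrm{Ip}(\gamma)}u^{\mathrm{Fp}(\gamma)}$. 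The crucial observation is that this law holds unconditionally for a fixed base, being nothing more than $e^{a\ln u}e^{b\ln u}=e^{(a+b)\ln u}$; no branch obstruction can intervene because $\ln u$ is a single fixed value.

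Next I would pull the integer part back onto the base $w$. Writing $m:=\mathrm{Ip}(\gamma)\in\mathbb{Z}$, the integer power $u^{m}$ coincides with repeated multiplication (or its reciprocal), so $u^{m}=(e^{\beta\ln w})^{m}=e^{m\beta\ln w}=w^{\beta m}$, where the middle equality is $(e^{z})^{m}=e^{mz}$ for integer $m$ with $z=\beta\ln w$. Substituting this into the product from the previous step gives $(w^{\beta})^{\gamma}=w^{\beta\,\mathrm{Ip}(\gamma)}(w^{\beta})^{\mathrm{Fp}(\gamma)}$, which is the claim.

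The step needing care, and the reason the argument cannot be pushed any further, is this last one: the pullout $(e^{z})^{k}=e^{kz}$ is legitimate only for integer $k$, since for non-integer $k$ it can fail by a winding-number factor $e^{2\pi i k n}$ — exactly the branch discrepancy behind the mal-transformations flagged in Section~\ref{sec:ListOfGoals}, such as $(w^{-2})^{-1/2}\not\equiv(w^{2})^{1/2}$. Hence the fractional factor $(w^{\beta})^{\mathrm{Fp}(\gamma)}$ must be left nested, and the entire content of the proposition is that only the \emph{integer} part of $\gamma$ may be safely extracted. The main obstacle is therefore conceptual rather than computational: to isolate precisely which exponent laws survive for complex principal powers (fixed-base addition and integer-power pullout) and to confirm that the decomposition invokes only those.
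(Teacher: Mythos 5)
Your argument for $w\neq0$ is sound and follows essentially the same route as the paper: split $\gamma=\mathrm{Ip}(\gamma)+\mathrm{Fp}(\gamma)$, apply the fixed-base addition law, and pull the integer part onto $w$ (the paper does this last step by invoking Proposition \ref{pro:IntegerExponent} rather than by writing $(e^{z})^{m}=e^{mz}$, but it is the same fact). The genuine gap is at $w=0$, which you dismiss on the grounds that the input can be $0/0$ there, so that $\equiv$ only demands agreement of generalized limits. That premise is false: for nonzero rational $\beta$ and non-integer $\gamma$, the left side $\left(w^{\beta}\right)^{\gamma}$ at $w=0$ is either $0$ or complex infinity, both of which this paper regards as \emph{defined}; it is never $0/0$ there. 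Consequently the identity must hold exactly at $w=0$, and that is precisely where a carelessly chosen decomposition fails: writing $\gamma=5/2$ as $3+(-1/2)$, for instance, turns a left side equal to $0$ (when $\beta>0$) into a right side $w^{3\beta}\left(w^{\beta}\right)^{-1/2}$, which is $0\cdot\infty=0/0$ at $w=0$ --- exactly the introduction of a removable singularity that goal 1 forbids and that the proposition is designed to avoid.

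The repair is short and is where the paper's proof actually spends its effort. Because $\mathrm{Ip}$ truncates toward $0$, either $\mathrm{Ip}(\gamma)=0$ or $\mathrm{sign}\left(\mathrm{Ip}(\gamma)\right)=\mathrm{sign}\left(\mathrm{Fp}(\gamma)\right)$ --- a fact you state, but use only to locate $\mathrm{Fp}(\gamma)$ in $(-1,1)$. Under that sign condition the law $u^{r_{1}+r_{2}}\equiv u^{r_{1}}u^{r_{2}}$ extends to $u=0$ (and, via $u=w^{\beta}$, to $u=\infty$): both sides are $0$ when both exponents are nonnegative with at least one positive, and both are complex infinity when both are negative, so no $0\cdot\infty$ indeterminacy can arise. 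Adding that one observation closes your proof and brings it into full agreement with the paper's.
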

\begin{proof}
We have\begin{equation}
\left(w^{\beta}\right)^{\gamma}\equiv\left(w^{\beta}\right)^{\mathrm{Ip}\left(\gamma\right)}\left(w^{\beta}\right)^{\mathrm{Fp}\left(\gamma\right)}\label{eq:IpPlusFP}\end{equation}
 because: 

1. With $\gamma\in\mathbb{Q}-\mathbb{Z}$, $\mathrm{Ip}(\gamma)=0\;\vee\;\mathrm{sign}\left(\mathrm{Ip}(\gamma)\right)=\mathrm{sign}\left(\mathrm{Fp}(\gamma)\right)$.

2. For any expression $u\in\mathbb{C}$ and $r_{1},r_{2}\in\mathbb{Q}\;|\; r_{1}=0\;\vee\;\textrm{sign}\, r_{1}=\textrm{sign}\, r_{2}$,\begin{equation}
u^{r_{1}+r_{2}}\equiv u^{r_{1}}u^{r_{2}},\label{eq:uR1PlusR2Equal}\end{equation}
even at $u=0$ with $r_{1}$ and $r_{2}$ both negative, making both
sides of (\ref{eq:uR1PlusR2Equal}) be complex infinity.

3. By Proposition \ref{pro:IntegerExponent} we also have $\left(w^{\beta}\right)^{\mathrm{Ip}\left(\gamma\right)}\equiv w^{\mathrm{\beta\, Ip}\left(\gamma\right)}$
because $\mathrm{Ip}\left(\gamma\right)\in\mathbb{Z}$.
\end{proof}
Therefore Form 1 is simply to transform $w^{\alpha}\left(w^{\beta_{1}}\right)^{\gamma_{1}}\cdots\left(w^{\beta_{n}}\right)^{\gamma_{n}}$
toward canonicality by transforming every positive fraction $\gamma_{k}$
to the interval $(0,1)$ and every negative fraction $\gamma_{k}$
to the interval $(-1,0)$. The various $w^{\mathrm{Ip}\left(\gamma_{k}\right)\,\beta_{k}}$
are combined with the original $w^{\alpha}$, giving a transformed
expression\begin{equation}
\widehat{W}:=w^{\hat{\alpha}}\left(w^{\beta_{1}}\right)^{\hat{\gamma}_{1}}\cdots\left(w^{\beta_{n}}\right)^{\hat{\gamma}_{n}}\label{eq:DefinitionOfWHat}\end{equation}
\label{sub:DefinitionOfWHat}where $\hat{\alpha}$ might be 0.

This form 1 satisfies goals 1 and 7, while possibly contributing progress
toward goals 2, 3, 5 and 6. This form also has the advantage that
if $\left(w^{\beta}\right)^{\hat{\gamma}}$ is subsequently raised
to any power $\lambda$, then we can simplify it to the simplified
value of $\left(w^{\beta}\right)^{\gamma\lambda}$ by Proposition
\ref{pro:ExponentInMinus1To1} because $-1<\hat{\gamma}<1$. For example,\begin{eqnarray*}
\left(\left(w^{2}\right)^{3/4}\right)^{7/6} & \rightarrow & \left(w^{2}\right)^{7/8}.\end{eqnarray*}
Although there is no such thing as a free radical in computer algebra,
this transformation of each nested power is fast and easy to implement
because it occurs only for certain fractional powers of powers, which
are relatively rare, and very little work is done even when it does
occur. There is no good reason why default simplification shouldn't
do at least this much.

However, default simplification for many systems unavoidably collects
similar factors, resulting in a partial reversal of this transformation
whenever a resulting unnested exponent $\hat{\alpha}$ is identical
to one of the inner nested exponents. This happens for \textsl{Derive},
TI-CAS and \textsl{Mathematica}, but not for Maple or Maxima. With
unavoidable collection, unconditional magnitude reduction of fractional
outer exponents can lead to an infinite recursion such as\[
w\left(w^{2}\right)^{5/3}\rightarrow w^{2}\left(w^{2}\right)^{2/3}\rightarrow w\left(w^{2}\right)^{5/3}\rightarrow\cdots.\]
Therefore in the Appendix \textsl{Mathematica} rewrite rules:
\begin{itemize}
\item Transformation $\left(w^{\beta}\right)^{\gamma}\rightarrow w^{\mathrm{Ip}\left(\gamma\right)\,\beta}\left(w^{\beta}\right)^{\mathrm{Fp}\left(\gamma\right)}$
is used unconditionally only \textsl{prior} to default simplification.
\item The rewrite rules that are active \textsl{during} default and optional
transformations do not reduce the magnitude of $\gamma$ if doing
so would give an unnested exponent $\hat{\alpha}$identical to $\beta$.
\item This transformation is used \textsl{after} default simplification
only if there is an unnested factor $w^{\alpha}$ and the transformation
would not be reversed by unavoidable collection of similar powers.
\end{itemize}
Implementations for other systems might have to overcome this difficulty
in some other way, or compromise and not always produce a form with
outer fractional exponents in the interval $(-1,1)$ when $w^{\alpha}$
can't be fully absorbed into some nested power.

\subsection{Form 2: Further reducing some outer exponents to (-1/2, 1/2{]}\label{sub:UnitWidthInterval}}

Form 2 is form 1 supplemented by an additional transformation.

Expression $\widehat{W}$ given by definition (\ref{eq:DefinitionOfWHat})
is equivalent to expression $W$ everywhere that $W$ is defined,
because at the only questionable point $w=0$:
\begin{enumerate}
\item Expressions $W$ and $\widehat{W}$ are both 0 if $\alpha\geq0$ and
all of the $\beta_{k}\gamma_{k}$ are positive.
\item Otherwise expression $W$ and $\widehat{W}$ are both complex infinity
if $\alpha\leq0$ and all of the $\beta_{k}\gamma_{k}$ are negative.
\item Otherwise if $\hat{\alpha}\geq0$ and all $\hat{\beta}_{k}\hat{\gamma}_{k}>0$,
then $W$ is 0/0 but $\widehat{W}$ has improved to 0.
\item Otherwise if $\hat{\alpha}\leq0$ and all $\hat{\beta}_{k}\hat{\gamma}_{k}<0$,
then $W$ is 0/0 but $\widehat{W}$ has improved to complex infinity.
\item Otherwise both $W$ and $\widehat{W}$ are 0/0. However, the magnitude
of the multiplicity of the removable singularity is less for $\widehat{W}$
if for any $\gamma_{k}$, $\left|\gamma_{k}\right|\geq1$.
\end{enumerate}
Expression $\widehat{W}$ is canonical in cases 1 through 4, but not
necessarily for case 5. For example,
\begin{enumerate}
\item The different equivalent expressions $z^{-1}\left(z^{2}\right)^{2/3}$
and $z\left(z^{2}\right)^{-1/3}$ both have outer exponents in (-1,
1). Of these two alternatives, the latter is preferable for most purposes
because the $\left|-2/3\right|<\left|4/3\right|$, making multiplicity
of the uncanceled portion of the removable singularity have a smaller
magnitude. Thus a rationalized numerator is sometimes preferable to
a rationalized denominator.
\item The different expressions $z^{-3}\left(z^{2}\right)^{1/2}$ and $z^{-1}\left(z^{2}\right)^{-1/2}$
both have outer exponents in (-1, 1), and they are equivalent wherever
the first alternative is defined. However, the latter unrationalized
denominator is preferable because the former is 0/0 at $z=0$ where
the latter is defined and equal to the complex infinity limit of the
former.
\item The different equivalent expressions $z\left(z^{2}\right)^{-1/2}$
and $z^{-1}\left(z^{2}\right)^{1/2}$ both have outer exponents in
(-1, 1), and the multiplicities of the uncanceled portion of their
removable singularity at $z=0$ are both 1. Of these two alternatives,
the latter is slightly preferable because it has a traditionally rationalized
denominator rather than a rationalized numerator.
\end{enumerate}
Thus after producing form 1 we can sometimes add 1 to a negative $\hat{\gamma}_{k}$
or subtract 1 from a positive $\hat{\gamma}_{k}$, then adjust $\alpha$
accordingly to reduce the magnitude of the overall removable singularity
-- perhaps entirely. If not, perhaps we can at least contribute toward
goals 2, 6 and 8 by rationalizing a square root in the denominator.

Let\begin{eqnarray*}
\Delta_{k} & := & \beta_{k}\hat{\gamma}_{k},\\
\Delta & := & \alpha+\Delta_{1}+\cdots+\Delta_{n}.\end{eqnarray*}

Transforming any of the $\left(w^{\beta_{k}}\right)^{\gamma_{k}}$
to $w^{m_{k}\beta_{k}}\left(w^{\beta_{k}}\right)^{\gamma_{k}-m_{k}}$
for any integer $m_{k}$ leaves $\Delta$ unchanged.

Our primary goal is, whenever possible, to make all of the $\Delta_{k}$
have the same sign and for $\alpha$ to have either the same sign
or be 0. A secondary goal is to prefer $-1/2<\hat{\gamma}_{k}\leq1/2$.
Therefore, the algorithm to convert form 1 to form 2 is:
\begin{enumerate}
\item If $\Delta>0$, then for each $\Delta_{k}<0$, add $\mathrm{sign}\left(\beta_{k}\right)$
to $\hat{\gamma}_{k}$ and subtract $\left|\beta_{k}\right|$ from
$\alpha,$ then return the result.
\item If $\Delta<0$, then for each $\Delta_{k}>0$, subtract $\mathrm{sign}\left(\beta_{k}\right)$
from $\hat{\gamma}_{k}$ and add $\left|\beta_{k}\right|$ to $\alpha$,
then return the result.
\item For each $\hat{\gamma}_{k}>1/2$, subtract 1 from $\hat{\gamma}_{k}$
and add $\beta_{k}$ to $\alpha$.
\item For each $\hat{\gamma}\leq-1/2$, add 1 to $\hat{\gamma}_{k}$ and
subtract $\beta_{k}$ from $\alpha$.
\item Return the result.
\end{enumerate}
This canonical form 2 satisfies all of the goals except for the aesthetic
goal 4.

For brevity the Appendix rewrite rules consider only one $\Delta_{k}$
at a time. This is sufficient for all of the test cases, which have
only one nested power. For an industrial-strength implementation,
each time we multiply a fractional power of a power by a product of
one or more factors, we should inspect those factors for identical
expressions $w$ and apply the above algorithm if that subset if non-empty.
The cost is $O\left(n_{c}\right)$ where $n_{c}$ is the number of
cofactors. The opportunity occurs only when multiplying a fractional
power of a power, which is rare; and the number of factors in a product
is typically quite small. Therefore it also quite reasonable to do
this in default simplification.

\subsection{\textsl{\textcolor{blue}{\label{sub:Form-3}}}Form 3: Finally, fully
absorb $w^{\alpha}$ into a fractional power if possible}

Form 3 is form 2 followed by an additional transformation.

Form 2 can result in an expression such as $z^{4}\left(z^{2}\right)^{1/2}$,
for which many users would regard $\left(z^{2}\right)^{5/2}$ as a
simpler result because it has one less factor. We can often absorb
at least some of $z^{\alpha}$ into one of the $\left(z^{\beta_{k}}\right)^{\gamma_{k}}$
by the transformation\[
z^{\alpha}\left(z^{\beta_{k}}\right)^{\gamma_{k}}\rightarrow z^{\beta_{k}\,\mathrm{Fp}\left(\alpha/\beta_{k}\right)}\left(z^{\mathbf{\beta_{k}}}\right)^{\gamma_{k}+\mathrm{Ip}\left(\alpha/\beta_{k}\right)},\]
which doesn't change the domain of definition. However, this transformation
seems inadvisable unless $\mathrm{Fp}\left(\alpha/\beta_{k}\right)=0$,
because otherwise it increases the contribution of a troublesome nested
power without reducing the number of factors. Also, this transformation
is problematic \textsl{during} intermediate computations even if $\mathrm{Fp}\left(\alpha/\beta_{k}\right)=0$,
because when there is more than one nested power, then more than one
might be eligible, making it awkward to maintain canonicality achieved
by form 2. Moreover, absorption conflicts with transformations done
to obtain form 1 or 2, thus risking infinite recursion.

A solution to this dilemma is to fully absorb $w^{\alpha}$ only just
before display -- after all other default and optional simplification.
This does have the minor disadvantage that what the user sees doesn't
faithfully represent the internal representation. However, that bridge
has already been crossed by most systems, which for speed and implementation
simplicity internally use, for example, $(\ldots)^{1/2}$ to represent
a displayed $\sqrt{\ldots}$ and $a+-1*b$ to represent a displayed
$a-b$.

When there is more than one nested power of $w$, then there might
be more than one way to absorb $\alpha$ completely into those nested
powers. For example,\begin{eqnarray}
w^{6}\left(w^{2}\right)^{1/2}\left(w^{3}\right)^{1/2}\left(w^{4}\right)^{1/2} & \equiv & \left(w^{2}\right)^{\boldsymbol{7/2}}\left(w^{3}\right)^{1/2}\left(w^{4}\right)^{1/2}\nonumber \\
 & \equiv & \left(w^{2}\right)^{1/2}\left(w^{3}\right)^{\boldsymbol{5/2}}\left(w^{4}\right)^{1/2}\label{eq:FirstExampleOfAbsorption}\\
 & \equiv & \left(w^{2}\right)^{\boldsymbol{3/2}}\left(w^{3}\right)^{1/2}\left(w^{4}\right)^{\boldsymbol{3/2}}.\nonumber \end{eqnarray}

In general, the possible resulting expressions are given by\[
\left(w^{\beta_{1}}\right)^{\gamma_{1}+m_{1}}\left(w^{\beta_{2}}\right)^{\gamma_{2}+m_{2}}\cdots\left(w^{\beta_{n}}\right)^{\gamma_{n}+m_{n}},\]
where the tuple of integers $\left\langle m_{1},m_{2},\ldots,m_{n}\right\rangle $
is a solution to the linear Diophantine equation\[
m_{1}\beta_{1}+m_{2}\beta_{2}+\cdots+m_{n}\beta_{n}=\alpha.\]

Solutions exist if and only if $\alpha$ is an integer multiple of
$\gcd\left(\beta_{1},\beta_{2},\ldots\beta_{n}\right)$, in which
case there might be a countably infinite number of tuples. However,
to avoid introducing removable singularities or increasing the magnitude
of their multiplicity, we are only interested in solutions for which
$\mathrm{sign}\left(m_{j}\beta_{j}\right)\equiv\mathrm{sign}\left(\alpha\right)$
for $j=1,2,\ldots,n$. Papp and Vizvari \cite{PappAndVizvari} describe
an algorithm for solving such sign-constrained linear Diaphantine
equations, and the \textsl{Mathematica} $\mathtt{Reduce}\left[\ldots\right]$
function can solve such equations. For example, suppose our canonical
form 2 result is\begin{equation}
z^{14}\left(z^{6/7}\right)^{1/2}\left(z^{10/7}\right)^{1/3}.\label{eq:nEqual2Example}\end{equation}

In \textsl{Mathematica}, we can determine the family of integers $m_{1}\geq0$
to add to $1/2$ and $m_{2}\geq0$ to add to 1/3 that together absorb
$z^{14}$ as follows: \begin{eqnarray*}
\mathsf{In}[1]: & = & \mathtt{Reduce\,}\left[\dfrac{6}{7}m_{1}\!+\!\dfrac{10}{7}m_{2}==14\:\;\&\&\:\;\dfrac{6}{7}m_{1}\geq0\:\;\&\&\:\;\dfrac{10}{7}m_{2}\geq0,\mathtt{\,\left\{ m_{1},m_{2}\right\} ,\, Integers}\right]\\
 &  & \quad//\mathtt{TraditionalForm}\end{eqnarray*}
\[
\mathsf{Out}[1]//\mathrm{TraditionalForm}=\left(m_{1}=3\wedge m_{2}=8\right)\,\vee\,\left(m_{1}=8\wedge m_{2}=5\right)\,\vee\,\left(m_{1}=13\wedge m_{2}=2\right)\]
Regarding the choice between alternative absorptions, canonicality
is not as important for a final displayed result as it is during intermediate
calculations where it facilitates important cancellations. However,
with more than one solution, we could choose one in a canonical way
as follows: Order the $\beta_{j}$ in some canonical way, such as
the way they order in $\left(w^{\beta_{1}}\right)^{\gamma_{1}}\cdots\left(w^{\beta_{n}}\right)^{\gamma_{n}}$,
then to choose the solution for which $m_{1}$ is smallest, with ties
broken according to which $m_{2}$ is smallest, etc.

Solution of sign-constrained linear Diophantine equations can be costly
-- probably too costly for default simplification. Consequently, the
rewrite rules in the Appendix simply absorb $w^{\alpha}$ if and only
if it can be completely absorbed into a single power of a power, in
which case the particular one is the first one encountered by the
pattern matcher. This is canonical, but it doesn't absorb $w^{\alpha}$
for examples such as (\ref{eq:nEqual2Example}). However, this transformation
is inexpensive because it is done only once in one pass over the expression
just prior to display, and the transformation requires comparing a
power of a power with its cofactors only in products where powers
of powers occur.

All but this absorption rule are automatically applied \textsl{before}
default simplification so that, for example, the input\[
\dfrac{w-w}{\dfrac{\sqrt{z^{2}}}{z^{3}}-\dfrac{1}{z\sqrt{z^{2}}}}\]
correctly simplifies to \texttt{indeterminate}, meaning 0/0, rather
than to 0.

The rewrite rules in the Appendix are not much more than the minimal
amount necessary to generate the form 3 results in Tables \ref{Flo:AlgorithmAtable}
and \ref{Flo:AlgorithmASqrtReciprocalTable}, together with the relevant
rows in Tables \ref{Flo:WonSqrtWSquaredTable} and \ref{Flo:Form1MinusForm4Table}.

\subsection{Form 4\label{sub:UnitMagnitudeFactor}: One unnested power times
a unit-magnitude factor}

Form 4 is quite different from forms 1 through 3.

A universal principal-branch formula for transforming a nested power
to an unnested power is\begin{equation}
\left(w^{\beta}\right)^{\gamma}\rightarrow\left(-1\right)^{\tau}w^{\beta\gamma},\label{eq:PowerOfPowerTransformation}\end{equation}
where\begin{equation}
\tau:=\begin{cases}
0, & \mathrm{if}\;\arg(0)=0,\\
\dfrac{\gamma\left(\arg\left(w^{\beta}\right)-\beta\arg\left(w\right)\right)}{\pi}, & \mathrm{otherwise},\end{cases}\label{eq:CorrectionExponentForPowerOfPower}\end{equation}
with short-circuit evaluation so that the {}``otherwise'' result
expression is not evaluated when the {}``if'' test is true.

The transformation given by formulas (\ref{eq:PowerOfPowerTransformation})
and (\ref{eq:CorrectionExponentForPowerOfPower}) can be derived from
the identities\begin{eqnarray}
\left|p\right| & \equiv & (-1)^{-\arg\left(p\right)/\pi}p\qquad\mathrm{for}\; p\neq0,\\
\left|q^{\alpha}\right|^{\beta} & \equiv & \left|q\right|^{\alpha\beta}.\end{eqnarray}

Notice that the \textbf{unit-polar} factor $(-1)^{\tau}$ is unit
magnitude because $\arg(\ldots)$ is always real, as are the rational
numbers $\gamma$ and $\beta$. Moreover, $(-1)^{\tau}$ is piecewise
constant with pie-shaped pieces emanating from $w=0$ because $\arg\left(w^{\beta}\right)$
and $\beta\arg\left(w\right)$ have the same derivative with respect
to $w$ everywhere they are both continuous, and each of them has
a finite number of discontinuities.

An imaginary exponential $e^{i\pi\tau}$ is an alternative to $(-1)^{\tau}$,
but it has the candidness disadvantage of introducing $i$ into a
factor that can be real and always is for the common case where the
outer exponent $\gamma$ is a half-integer.

If $\arg(0)$ is defined as 0, as it is in \textsl{Mathematica}, Maple,
and Maxima, then we can define $\tau$  more concisely and unconditionally
as\begin{equation}
\tau:=\dfrac{\gamma\left(\arg\left(w^{\beta}\right)-\beta\arg\left(w\right)\right)}{\pi}.\label{eq:UnconditionalDefinitionOfTau}\end{equation}

\begin{prop}
\label{pro:PositiveRadicand}If $w\geq0$, then $\tau=0$.\end{prop}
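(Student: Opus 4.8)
The plan is to argue directly from the unconditional definition (\ref{eq:UnconditionalDefinitionOfTau}). Since $\beta$ and $\gamma$ are finite rational numbers and $\pi\neq 0$, the product $\tau=\gamma\bigl(\arg(w^{\beta})-\beta\arg(w)\bigr)/\pi$ is zero exactly when its numerator $\arg(w^{\beta})-\beta\arg(w)$ is zero. So the whole task reduces to showing that this numerator vanishes whenever $w\geq 0$. The hypothesis forces $w$ to be a nonnegative real, so I would first record that $\arg(w)=0$: for $w>0$ this is the usual convention, and for $w=0$ it is the stipulation $\arg(0)=0$ under which (\ref{eq:UnconditionalDefinitionOfTau}) was adopted in the first place. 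With $\beta\arg(w)=0$, it remains only to verify that $\arg(w^{\beta})=0$ as well.

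Next I would dispatch the generic case $w>0$. Here $w$ is a strictly positive real, so on the principal branch $w^{\beta}=e^{\beta\ln w}$ with $\ln w$ real; hence $w^{\beta}$ is again a strictly positive real and $\arg(w^{\beta})=0$. Combined with the previous step, the numerator is $0-\beta\cdot 0=0$, so $\tau=0$ on the open positive real axis. This is the easy part and is essentially immediate from the fact that the principal power of a positive real is a positive real.

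The main obstacle is the boundary point $w=0$, where $w^{\beta}=0^{\beta}$ need not be an ordinary positive real. I would split on the sign of $\beta$: for $\beta>0$ we have $0^{\beta}=0$ and the convention gives $\arg(0)=0$; for $\beta=0$ we have $0^{\beta}=1$ with $\arg(1)=0$; and for $\beta<0$ the value is a complex infinity arising from a nonnegative real base, which is assigned argument $0$ under the same $\arg(0)=0$ and principal-branch conventions used throughout the article. In every subcase $\arg(w^{\beta})=0$, so the numerator again vanishes. Assembling the cases $w>0$ and $w=0$ yields $\tau=0$ for all $w\geq 0$, which is the claim. The only point I would state with care is the $\beta<0$ subcase at $w=0$, since it rests on extending $\arg$ to the infinite value consistently with the conventions fixed earlier; every other subcase follows at once from $\arg$ of a point on the nonnegative real axis being $0$.
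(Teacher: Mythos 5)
Your treatment of $w>0$ coincides with the paper's: there $\arg(w)=0$, and $\arg\left(w^{\beta}\right)=0$ because a principal-branch rational power of a positive real is again a positive real, so the numerator $\gamma\left(\arg\left(w^{\beta}\right)-\beta\arg\left(w\right)\right)$ vanishes. The divergence, and the genuine gap, is at $w=0$. The paper's proof does not evaluate the formula there at all: it appeals to the first branch of the piecewise definition (\ref{eq:CorrectionExponentForPowerOfPower}), which together with the stipulated short-circuit evaluation sets $\tau=0$ at that point precisely so that the sub-expressions $\arg\left(w^{\beta}\right)$ and $\arg\left(w\right)$ never have to be evaluated at $w=0$. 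By working instead from the unconditional formula (\ref{eq:UnconditionalDefinitionOfTau}), you are forced in the subcase $\beta<0$ to assign a value to $\arg\left(0^{\beta}\right)$, that is, to the argument of complex infinity. No such convention is established anywhere in the article, and none can reasonably be adopted: complex infinity is the common limit along every ray from the origin, so it has no preferred argument (compare the article's own suggestion that even $\arg(0/0)$ should contract only to the interval $(-\pi,\pi]$, not to a number). Your closing sentence correctly flags this as the delicate point, but asserting that the ``same conventions'' cover it does not close the gap -- the step would fail under any honest reading of $\arg$ at an infinite point.

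The repair is short: for $w=0$ simply cite the first branch of definition (\ref{eq:CorrectionExponentForPowerOfPower}), as the paper does, and reserve the formula-based computation for $w>0$. (The $\beta=0$ subcase you include is vacuous in any event, since the article's standing assumptions make $\beta$ a non-zero rational with $\beta\leq-1$ or $\beta>1$.)
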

\begin{proof}
When $w=0$, $\tau=0$ follows immediately from expression (\ref{eq:CorrectionExponentForPowerOfPower}),
and\\
$w>0\:\Rightarrow\:\arg\left(w^{\beta}\right)=0\wedge\arg\left(w\right)=0\:\Rightarrow\:\gamma\left(\arg\left(w^{\beta}\right)-\beta\arg\left(w\right)\right)/\pi=0\:\Rightarrow\:\tau=0.$\end{proof}
\begin{prop}
\label{pro:ExponentInMinus1To1}If $-1<\beta\leq1$, then $\tau=0$.\end{prop}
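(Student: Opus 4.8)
The plan is to show that the bracketed quantity $\arg(w^{\beta})-\beta\arg(w)$ appearing in the definition (\ref{eq:CorrectionExponentForPowerOfPower}) of $\tau$ vanishes identically under the hypothesis $-1<\beta\leq1$; once that factor is zero, $\tau=0$ follows regardless of the value of $\gamma$. First I would dispose of the cases $w\geq0$ (real and nonnegative, including $w=0$), where $\arg(w)=0$ and $\tau=0$ already by Proposition \ref{pro:PositiveRadicand}. For the remaining $w$ I would invoke the principal-branch definition $w^{\beta}=|w|^{\beta}e^{i\beta\arg(w)}$, so that $\arg(w^{\beta})$ is the unique representative of $\beta\arg(w)$ modulo $2\pi$ that lies in $(-\pi,\pi]$. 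Consequently $\arg(w^{\beta})=\beta\arg(w)$ holds \emph{exactly} when $\beta\arg(w)$ already lies in $(-\pi,\pi]$, and the whole proposition reduces to a single interval inclusion.

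The core step is therefore to prove that $\beta\theta\in(-\pi,\pi]$ whenever $\theta:=\arg(w)\in(-\pi,\pi]$ and $-1<\beta\leq1$, which I would handle by splitting on the sign of $\theta$ (the value $\theta=0$ being already covered above). When $\theta>0$, multiplying $-1<\beta\leq1$ by the positive $\theta$ preserves both inequalities and gives $-\theta<\beta\theta\leq\theta$; since $0<\theta\leq\pi$ this yields $\beta\theta>-\theta\geq-\pi$ and $\beta\theta\leq\theta\leq\pi$, so $\beta\theta\in(-\pi,\pi]$. When $\theta<0$, multiplying by the negative $\theta$ reverses both inequalities to give $\theta\leq\beta\theta<-\theta$; since $-\pi<\theta<0$ this yields $\beta\theta\geq\theta>-\pi$ and $\beta\theta<-\theta<\pi$, so $\beta\theta\in(-\pi,\pi)$. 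In every case $\beta\arg(w)\in(-\pi,\pi]$, hence $\arg(w^{\beta})=\beta\arg(w)$ and $\tau=0$.

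The main obstacle is purely the endpoint bookkeeping rather than any analytic difficulty: I must track which inequalities remain strict after multiplication and confirm that the single boundary value actually attaining $\pi$ — namely $\beta=1$ with $\theta=\pi$, i.e. $w$ a negative real — is admitted by the half-open interval $(-\pi,\pi]$, whereas $\beta=-1$ with $\theta=\pi$ would land at $-\pi\notin(-\pi,\pi]$. This is precisely why the hypothesis is the asymmetric $-1<\beta\leq1$ and not $|\beta|\leq1$, and I would note that $w^{-1}$ at $w=-1$ (where $\arg(w^{-1})=\pi\neq-\pi=\beta\arg(w)$) shows the lower bound cannot be relaxed to $\beta\geq-1$, so the proposition is sharp. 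Finally, since the argument never uses $\gamma$, the conclusion $\tau=0$ is uniform in $\gamma$, consistent with the unconditional form (\ref{eq:UnconditionalDefinitionOfTau}) of $\tau$ when $\arg(0)=0$.
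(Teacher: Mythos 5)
Your proof is correct and follows essentially the same route as the paper: the paper's entire argument is the one-line chain $-1<\beta\leq1\Rightarrow\arg(w^{\beta})=\beta\arg(w)\Rightarrow\tau=0$, and your interval bookkeeping simply supplies the justification for the middle implication that the paper leaves implicit. Your sharpness remark at $\beta=-1$, $w=-1$ is a nice addition but not part of the paper's proof.
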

\begin{proof}
$-1<\beta\leq1\:\Rightarrow\:\arg\left(w^{\beta}\right)=\beta\arg\left(w\right)\:\Rightarrow\:\gamma\left(\arg\left(w^{\beta}\right)-\beta\arg\left(w\right)\right)/\pi=0\:\Rightarrow\:\tau=0.$\end{proof}
\begin{prop}
\label{pro:IntegerExponent}If $\gamma$ is integer, then $(-1)^{\tau}=1$.\end{prop}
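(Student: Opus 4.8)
The plan is to reduce the claim $(-1)^{\tau}=1$ to the statement that $\tau$ is an even integer, and the entire content of that reduction lives in the principal-branch relationship between $\arg\!\left(w^{\beta}\right)$ and $\beta\arg(w)$. Once that relationship is in hand, the hypothesis $\gamma\in\mathbb{Z}$ finishes the argument almost mechanically.

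First I would dispose of the point $w=0$. There the conditional definition (\ref{eq:CorrectionExponentForPowerOfPower}) returns $\tau=0$ outright (and the unconditional definition (\ref{eq:UnconditionalDefinitionOfTau}) does likewise under the convention $\arg(0)=0$), so $(-1)^{\tau}=(-1)^{0}=1$. Now assume $w\neq0$. By the principal-branch definition of the fractional power, $w^{\beta}=\left|w\right|^{\beta}e^{i\beta\arg(w)}$, and $\arg\!\left(w^{\beta}\right)$ is by definition the representative of $\beta\arg(w)$ lying in the fundamental domain $(-\pi,\pi]$. Hence $\arg\!\left(w^{\beta}\right)$ and $\beta\arg(w)$ differ by an integer multiple of $2\pi$: there is an integer $k=k(w,\beta)$ with
\[
\arg\!\left(w^{\beta}\right)-\beta\arg(w)=-2\pi k.
\]
Substituting into (\ref{eq:UnconditionalDefinitionOfTau}) gives $\tau=\gamma(-2\pi k)/\pi=-2\gamma k$, which is an even integer because $\gamma$ and $k$ are both integers.

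Finally, an even integer power of $-1$ is $1$; equivalently, using the identification $(-1)^{\tau}=e^{i\pi\tau}$ noted just above, $(-1)^{\tau}=e^{-2\pi i\gamma k}=1$ since $\gamma k\in\mathbb{Z}$. This completes the argument.

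The one step that needs care, and which I expect is the crux, is the claim that $\arg\!\left(w^{\beta}\right)-\beta\arg(w)\in2\pi\mathbb{Z}$. This is precisely where the principal-branch convention enters: the difference is exactly the shift needed to bring $\beta\arg(w)$ back into $(-\pi,\pi]$, so it is a bounded integer multiple of $2\pi$. Everything following it is pure substitution, and the role of the hypothesis $\gamma\in\mathbb{Z}$ is simply to guarantee that the integer $k$ gets multiplied by an integer rather than a genuine fraction (which is exactly why the conclusion can fail for non-integer $\gamma$, the situation the surrounding propositions are designed to handle).
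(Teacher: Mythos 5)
Your proposal is correct and follows essentially the same route as the paper: both rest on the observation that $\arg\!\left(w^{\beta}\right)-\beta\arg(w)$ is an integer multiple of $2\pi$ on the principal branch, so dividing by $\pi$ yields an even integer, and multiplying by an integer $\gamma$ preserves evenness, whence $(-1)^{\tau}=1$. Your explicit handling of $w=0$ and your unpacking of why the difference of arguments lies in $2\pi\mathbb{Z}$ merely spell out what the paper's one-line proof leaves implicit.
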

\begin{proof}
$\arg\left(w^{\beta}\right)$ is $\beta\arg\left(w\right)$ plus an
even integer multiple of $2\pi$. Thus when $\gamma$ is an integer,
then $\gamma\left(\arg\left(w^{\beta}\right)-\beta\arg\left(w\right)\right)/\pi$
is an even integer, making $\tau$ be an even integer, making $(-1)^{\tau}=1$.
\end{proof}
The simplification afforded by these three propositions should have
already been exploited with bottom-up default simplification, in which
case $\left(w^{\beta}\right)^{\gamma}$ will have already been simplified
to $w^{\beta\gamma}$. If it isn't, then that is another opportunity
to improve the system for very little effort.%
\footnote{Do your computer algebra system's default and optional transformations
de-nest $\left(w^{\beta}\right)^{\gamma}$ for such $\beta$, $\gamma$,
and $w$ declared non-negative?%
} Thus, because $\beta$ and $\gamma$ are explicit non-zero rational
numbers, without loss of generality this article assumes that $w$
isn't known to be nonnegative, and that $\beta\leq-1$ or $\beta>1$,
and that $\gamma$ is non-integer.

Using transformation (\ref{eq:PowerOfPowerTransformation}) on every
$\left(w^{\beta_{k}}\right)^{\gamma_{k}}$ in $W$ defined by (\ref{eq:DefinitionOfNestedPowerProduct})
then collecting powers of $-1$ gives\begin{equation}
\overline{W}=\left(-1\right)^{\sigma}w^{\alpha+\beta_{1}\gamma_{1}+\cdots+\beta_{n}\gamma_{n}},\label{eq:Form1}\end{equation}
where $\sigma$ is a simplified sum of terms of the form (\ref{eq:CorrectionExponentForPowerOfPower})
or (\ref{eq:UnconditionalDefinitionOfTau}).

The factor $\left(-1\right)^{\sigma}$ is also unit magnitude with
pie-shaped piecewise constant pieces because it is the product of
such factors. This form has two great advantages over the other three
forms:
\begin{itemize}
\item All of the exponents have been combined into a \textsl{single unnested
exponent}.
\item Cancelable singularities are always \textsl{completely} canceled.
\end{itemize}
Unfortunately this comes at the expense of a form that is usually
bulkier than the other forms

Simplification of individual piecewise expressions and combinations
of such expressions is currently rather weak in most systems, but
Carette \cite{Carette} describes a canonical form for such expressions,
so we can hope for improvement. In our case the piecewise expressions
all have the same tests. Therefore we can add all of the 0s together
and add all of the expressions involving $\arg(\ldots)$ together
into a single piecewise function. For example,\begin{multline}
\dfrac{\left(z^{2}\right)^{3/2}\left(z^{3}\right)^{4/3}}{z^{6}}\\
\rightarrow\left(\!\left(-1\right)^{\!\begin{cases}
0, & \!\mathrm{\!\! if}\:\arg z\!=\!0,\\
\frac{\frac{3}{2}\left(\arg\!\left(z^{2}\right)-2\arg z\right)}{\pi}, & \!\mathrm{\!\! otherwise}\end{cases}}\!\right)\!\negthinspace\left(\!\left(-1\right)^{\!\begin{cases}
0, & \!\mathrm{\!\! if}\:\arg z\!=\!0,\\
\frac{\frac{4}{3}\left(\arg\left(z^{3}\right)-3\arg z\right)}{\pi}, & \!\mathrm{\!\! otherwise}\end{cases}}\!\right)\! z^{\frac{3}{2}2+\frac{4}{3}3-6}\\
\rightarrow\left(\left(-1\right)^{\begin{cases}
0, & \mathrm{\!\! if}\:\arg z=0,\\
\frac{\frac{3}{2}\arg\left(z^{2}\right)+\frac{4}{3}\arg(z^{3})-7\arg z}{\pi}, & \mathrm{\!\! otherwise}\end{cases}}\right)z.\label{eq:ExampleOfForm4}\end{multline}

If $\arg(0)\rightarrow0$, then simplification of piecewise expressions
isn't an issue here and the resulting exponent of $-1$ is simply
$\left(\frac{3}{2}\arg\left(z^{2}\right)+\frac{4}{3}\arg(z^{3})-7\arg z\right)/\pi$.
However, the result is not canonical either way, because starting
with the equivalent canonical form 2,\begin{eqnarray}
\dfrac{\sqrt{z^{2}}\left(z^{3}\right)^{1/3}}{z} & \rightarrow & \left(\left(-1\right)^{\begin{cases}
0, & \mathrm{\!\! if}\;\arg z=0,\\
\frac{\frac{1}{2}\arg\left(z^{2}\right)+\frac{1}{3}\arg(z^{3})-2\arg z}{\pi} & \mathrm{\!\! otherwise}\end{cases}}\right)z,\label{eq:FullySimplifiedForm4Example}\end{eqnarray}
which has smaller magnitude coefficients. Thus for canonicality we
could precede this transformation with a transformation to form 2.
Equivalently we can adjust the coefficients of the $\arg\left(w^{\beta_{k}}\right)$
and $\arg(w)$ analogous to how we adjusted exponents to arrive at
form 2. This is preferable because it also canonicalizes expressions
of form 4 that are entered directly or generated by the system.

With pie-shaped pieces, $(-1)^{\sigma}$ can always be expressed in
the more candid canonical form\[
\begin{cases}
c_{1}, & \mathrm{if}\;-\pi<\arg w\:::\:\theta_{1,}\\
c_{2}, & \mathrm{if}\;\theta_{1\:}::\:\arg w\:::\:\theta_{2,}\\
\ldots & \ldots\\
c_{m}, & \mathrm{otherwise},\end{cases}\]
where $c_{1}$ through $c_{m}$ are unit-magnitude complex constants,
$\theta_{1}$ through $\theta_{m-1}$ are real constants in $(-\pi,\pi)$,
and each instance of {}``::'' is either {}``$<$'' or {}``$\leq$''.%
\footnote{In a degenerate case, one or more of the pieces of pie might be a
ray -- very dietetic. %
} Moreover:
\begin{enumerate}
\item When $w$ is real, then the positive and negative real axes are each
entirely within one pie slice, enabling us to simplify$\left(-1\right)^{\sigma}$
to one unconditional constant or piecewise expression of the form\[
\begin{cases}
c_{1}, & \mathrm{if}\: w::0,\\
c_{2} & \mathrm{otherwise},\end{cases}\]
where {}``::'' is one of the comparison operators {}``>'', {}``$\geq$'',
=, {}``$\leq$'', {}``<'', or {}``$\neq$''.
\item For half-integers or quarter-integer fractional powers, $\left(-1\right)^{\tau}$
can be expressed as a piecewise expression depending on the real and
imaginary parts of $w$ rather than $\arg(w)$. For example,\begin{eqnarray}
\dfrac{\left(w^{2}\right)^{1/2}}{w} & \rightarrow & \begin{cases}
\:1 & \mathrm{if}\:\Re\left(w\right)>0\vee\Re\left(w\right)\geq0\wedge\Im\left(w\right)\geq0,\\
-1 & \mathrm{otherwise};\end{cases}\label{eq:GoesToCsgn}\\
\dfrac{\left(w^{4}\right)^{1/4}}{w} & \rightarrow & \begin{cases}
1 & -\Re\left(w\right)<\Im\left(w\right)\leq\Re\left(w\right),\\
-i & \mathrm{if}\:-\Im\left(w\right)<\Re\left(w\right)\leq\Im\left(w\right),\\
-1 & \Re\left(w\right)<\Im\left(w\right)\leq-\Re\left(w\right),\\
i & \mathrm{otherwise}.\end{cases}\end{eqnarray}

\end{enumerate}
Notice that the right side of result (\ref{eq:GoesToCsgn}) is the
definition of the Maple csgn function. 

Without an abbreviation such as $\mathrm{csgn}(\ldots)$, Most implementers
will probably want to avoid form 4 as a \textsl{default} even when
$\arg\left(0\right)\rightarrow0$, because $(-1)^{\sigma}$ is likely
to be rather complicated nonetheless:
\begin{enumerate}
\item It will probably contain complicated square roots and arctangents
if the real and imaginary parts of $w$ are given as exact numbers.
\item It will probably also contain piecewise sign tests if given real and
imaginary parts that are non-numeric, such as for $w=x+iy$ with non-numeric
real indeterminates $x$ and $y$.
\item It will probably contain radicals nested at least one deep if $\arg\left(w\right)$
is a simple enough rational multiple of $\pi$.
\item Otherwise it will contain perhaps bulky sub-expressions $\arg\left(w\right)$
and $\arg\left(w^{\beta}\right)$ -- or, worse yet, expressions involving
square roots, arctangents, piecewise sign tests, and sub-expressions
of the form $\Re\left(w\right)$ and $\Im\left(w\right)$.
\end{enumerate}
As espoused by Corless and Jeffrey \cite{CorlessAndJeffrey}, expression
$\tau$ can alternatively be defined in terms of the unwinding function
$\kappa$ as:\begin{equation}
\tau:=2\gamma\kappa\left(\beta\ln w\right).\label{eq:TauViaUnwindingNumber}\end{equation}
This is more concise than definition (\ref{eq:CorrectionExponentForPowerOfPower}),
but a function that computes unwinding numbers isn't currently available
externally in most computer algebra systems. Also, unless the system
automatically transforms $\ln0$ to $-\infty$, as is done in \textsl{Mathematica}
and \textsl{Derive}, then definition (\ref{eq:TauViaUnwindingNumber})
has the same disadvantages as using $\arg\left(\ldots\right)$.%
\footnote{For TI-CAS, $\ln(0)\rightarrow\mathrm{undef}$. An error is inconveniently
thrown by Maple for $\ln(0)$ and by Maxima for log(0).%
}

\subsection{Simplifying mixtures of form 4 with form 1, 2 or 3}

If an expression contains a mixture of forms, then we should unify
the forms to facilitate collection and cancelation. For example with
$\arg(0)\rightarrow0$, the three expressions\begin{eqnarray}
 & \dfrac{\left(z^{2}\right)^{1/2}}{z},\label{eq:form4Alternative1}\\
 & (-1)^{\left(\arg\left(z^{2}\right)/2-\arg z\right)/\pi},\label{eq:form4Alternative2}\\
 & \begin{cases}
\:1 & \mathrm{if}\:\Re\left(w\right)>0\vee\Re\left(w\right)\geq0\wedge\Im\left(w\right)\geq0,\\
-1 & \mathrm{otherwise}\end{cases}\label{eq:form4Alternative3}\end{eqnarray}
are equivalent. Therefore the result of any linear combination of
them should transform either to 0 or a multiple of one of them. The
rewrite rules in the Appendix don't address this issue.

In general it is easy to transform form (\ref{eq:form4Alternative1})
to form (\ref{eq:form4Alternative2}), which is only slightly more
difficult to transform to either form (\ref{eq:form4Alternative1})
or form (\ref{eq:form4Alternative3}).

\section{\label{sec:Unimplemented-extensions}Unimplemented extensions}

\subsection{More semantic pattern matching for $w$}

The \textsl{Mathematica} pattern matcher is mostly syntactic rather
than semantic, and the rules in the Appendix do almost no transformation
of the radicand expressions $w$ or any cofactors thereof. Thus recognition
of opportunities relies mostly on the default transformations together
with any optional transformations done by the user. Consequently,
opportunities for the rules to simplify nested power products might
not be recognized for radicands that aren't indeterminates. The rules
work for most functional forms that have syntactically identical forms
for the different instances of $w$, such as\begin{eqnarray*}
\dfrac{\left(\mathrm{Log}\left[x^{2}\left(x+y\right)\right]^{2}\right)^{5/3}}{\mathrm{Log}\left[x^{2}\left(x+y\right)\right]} & \rightarrow & \mathrm{Log}\left[x^{2}\left(x+y\right)\right]\left(\mathrm{Log}\left[x^{2}\left(x+y\right)\right]^{2}\right)^{2/3}.\end{eqnarray*}
However the rules don't apply to \textsl{all} such functional form
opportunities. For example,\[
\dfrac{\left(\mathrm{Cos}[\theta]^{2}\right)^{5/3}}{\mathrm{Cos}[\theta]}\rightarrow\left(\mathrm{Cos}[\theta]^{2}\right)^{5/3}\mathrm{Sec}[\theta]\]
because default simplification transforms $\mathrm{Cos}[\theta]^{-1}$
to $\mathrm{Sec}[\theta]$.

Even more opportunities are unrecognized when $w$ is a sum. As an
example of how to overcome this, the Appendix includes one extra rule
that square-free factors radicands that are sums so that, for example,\[
\dfrac{\left(z^{2}+2z+1\right)^{5/3}}{z+1}\rightarrow\dfrac{\left(\left(z+1\right)^{2}\right)^{5/3}}{z+1}\rightarrow\left(z+1\right)\left(\left(z+1\right)^{2}\right)^{2/3}.\]
Factored over the integers or square-free factored form is a good
choice for radicands for other reasons too, and these forms are canonical
when the radicand is a rational expression. However,\[
\left(z^{2}+2z+1\right)\left(\left(z+1\right)^{2}\right)^{5/3}\rightarrow\left(z+1\right)^{2}\left(\left(z+1\right)^{2}\right)^{5/3}\rightarrow\left(\left(z+1\right)^{2}\right)^{8/3},\]
would require another rule that factors the \textsl{cofactor} of a
power of a power of a sum. Then, perhaps we would want another rule
to factor \textsl{sums containing} such radicands so that\[
z^{2}\left(\left(z+1\right)^{2}\right)^{5/3}+2z\left(\left(z+1\right)^{2}\right)^{5/3}+\left(\left(z+1\right)^{2}\right)^{5/3}\rightarrow\left(\left(z+1\right)^{2}\right)^{8/3}.\]
It is impossible to implement equivalence recognition for all possible
expressions $w$ representable in general purpose systems, but it
is worth expending a modest amount of execution time for default simplification
and more time for optional transformations.

The Appendix leaves most such opportunities unimplemented because
the simplifications described here are so fundamental and low level
that they should be part of the built-in transformations. Good simplification
of nested power products is more appropriately built into a system
rather than provided as an optionally loaded package that most users
are unlikely to know about and load into every session. So rather
than implementing a comprehensive package for one system, the intent
of this article is to inspire implementers of all systems to improve
some very fundamental transformations -- at least to the extent that
it can be done economically.

\subsection{Non numeric exponents}

Although not implemented in the rules of the Appendix, more generally
the exponents for forms 1 through 4 can be Gaussian fractions or even
symbolic, in which case we can still apply these transformations to
the rational numeric parts of the exponents. For example,\[
w^{3\xi+\rho}\left(w^{\xi}\right)^{3/2+\omega\pi i}\rightarrow\left(w^{\xi}\right)^{3}w^{\rho}\left(w^{\xi}\right)^{1+1/2+\omega\pi i}\rightarrow\left(w^{\xi}\right)^{4}w^{\rho}\left(w^{\xi}\right)^{1/2+\omega\pi i}\rightarrow w^{4\xi+\rho}\left(w^{\xi}\right)^{1/2+\omega\pi i}.\]

As another example, if a user has declared the variable $n$ to be
integer, then\[
w^{-n}\left(w^{2}\right)^{n+1/2}\rightarrow w^{n}\left(w^{2}\right)^{1/2}.\]

To some extent, the methods can also be extended to handle floating-point
and symbolic real expressions for exponents $\alpha$ and $\beta_{k}$.
For example,\begin{eqnarray*}
w^{4.321}\left(w^{1.234}\right)^{3/2} & \rightarrow & w^{5.555}\sqrt{w^{1.234}},\\
w^{2-\pi}\left(w^{\pi}\right)^{3/2} & \rightarrow & w^{2}\sqrt{w^{\pi}}.\end{eqnarray*}

\section{\label{sec:Summary}Summary}

This article:
\begin{enumerate}
\item shows that many widely-used computer algebra systems have significant
room for improvement at simplifying sub-expressions of the form $w^{\alpha}\left(w^{\beta_{1}}\right)^{\gamma_{1}}\cdots\left(w^{\beta_{n}}\right)^{\gamma_{n}}$;
\item defines four different simplified forms with good properties;
\item explains how to compute these forms;
\item includes a demonstration implementation of form 3 via \textsl{Mathematica}
rewrite rules.
\end{enumerate}

\section*{Acknowledgment}

I thank Sam Blake for his helpful assistance with \textsl{Mathematica},
Daniel Lichtblau for information about the algorithm in $\mathtt{Reduce}\left[\ldots\right]$,
and a referee for many fine suggestions.

\section*{Appendix: \textmd{\textsl{Mathematica}} rewrite rules for $w^{\alpha}\left(w^{\beta_{1}}\right)^{\gamma_{1}}\cdots\left(w^{\beta_{n}}\right)^{\gamma_{n}}$}

\begin{verbatim}
   (* EXTRA SIMPLIFICATION DONE BEFORE ORDINARY EVALUATION: *)

PreProductOfPowersOfPowers [(w_Plus)^(g_Rational /; !IntegerQ[g])] :=
  Block[{squareFree = FactorSquareFree[w]},
        squareFree^g /; Head[squareFree] =!= Plus];

PreProductOfPowersOfPowers [(w_^b_)^(g_Rational /; g <= -1 || g >= 1)] :=
   w^(IntegerPart[g]*b) * (w^b)^FractionalPart[g];

PreProductOfPowersOfPowers [(w_^b_)^(g_Rational /; g<=-1 || g>=1) * w_^a_. * u_]:=
  PreProductOfPowersOfPowers[w^(a+IntegerPart[g]*b) * (w^b)^FractionalPart[g] * u];

PreProductOfPowersOfPowers [(w_^b_)^g_ * w_^a_. * u_. /; Sign[a] != Sign[b*g] &&
    (Sign [a+b*Sign[g]] == Sign [b*(g-Sign[g])] ||
     Min [Abs[a], Abs[b*g]] > Min [Abs [a+b*Sign[g]], Abs [b*(g-Sign[g])]] ||
     g == -1/2 && Min [Abs[a], Abs[b/2]] == Min [Abs[a-b], Abs[b/2]])] :=
   w^(a+b*Sign[g]) * (w^b)^(g-Sign[g]) * u;

PreProductOfPowersOfPowers [f_[args__]] :=
   Apply [f, Map [PreProductOfPowersOfPowers, {args}]];

PreProductOfPowersOfPowers [anythingElse_] := anythingElse;

   (* EXTRA SIMPLIFICATION DURING ORDINARY EVALUATION: *)

Unprotect [Times];

(w_^b_)^g_ * w_^a_. * u_. /; Sign[a] != Sign[b*g] &&
    (Sign [a+b*Sign[g]] == Sign [b*(g-Sign[g])] ||
     Min [Abs[a], Abs[b*g]] > Min [Abs [a+b*Sign[g]], Abs [b*(g-Sign[g])]] ||
     Min [Abs[a], Abs[b*g]] == Min [Abs [a+b*Sign[g]], Abs [b*(g-Sign[g])]] &&
       Abs[g] > Abs [g-Sign[g]] ||
     g == -1/2 && Min [Abs[a], Abs[b/2]] == Min [Abs[a-b], Abs[b/2]]) :=
   w^(a+b*Sign[g]) * (w^b)^(g-Sign[g]) * u;

(w_^b1_)^g1_ * (w_^b2_)^g2_ * u_. /; Sign[b1*g1] != Sign[b2*g2] &&
     Abs[b2] > Abs[b1] && Sign [b2*(g2-Sign[g2])] == Sign [b1*g1 + b2*Sign[g2]] :=
   (w^b2)^(g2-Sign[g2]) * ((w^(b2*Sign[g2]) * (w^b1)^g1) * u);

Protect [Times];

   (* EXTRA SIMPLIFICATION DONE AFTER ORDINARY EVALUATION: *)

PostProductOfPowersOfPowers [w_^a_ * (w_^b_)^g_ * u_. /; IntegerQ [a/b]] :=
    PostProductOfPowersOfPowers [(w^b)^(g+a/b) * u];

PostProductOfPowersOfPowers [w_^a_.*(w_^b_)^(g_Rational /; g<=-1 || g>=1)*u_.
     /; !IntegerQ [(a + b*IntegerPart[g])/b]] :=
  PostProductOfPowersOfPowers[(u*w^(a+b*IntegerPart[b]))*(w^b)^FractionalPart[g]];

PostProductOfPowersOfPowers [f_[args__]] :=
   Apply [f, Map [PostProductOfPowersOfPowers, {args}]];

PostProductOfPowersOfPowers [anythingElse_] := anythingElse;

$Post = PostProductOfPowersOfPowers;    $Pre = PreProductOfPowersOfPowers;
\end{verbatim}

\section*{Tables \ref{Flo:AlgorithmAtable} through \ref{Flo:Form1MinusForm4Table}}

\begin{table}[h]
\caption{\textbf{Unflawed} results \textbf{of Appendix rewrite rules} for 1st
row $\times$ 1st column.\protect \\
Compare with Tables \ref{Flo:MathematicaDefaultTable} through
\ref{Flo:MaximaFullratsimpTableAndMapleSimplifyTable}}

\begin{centering}
\label{Flo:AlgorithmAtable}
\par\end{centering}

\centering{}

\par\end{centering}

\end{table}

\par\end{center}

\end{document}